\def\ACKS{}
\theoremstyle{plain}
\newtheorem{theorem}{Theorem}
\newtheorem{lemma}[theorem]{Lemma}
\newtheorem*{theorem*}{Theorem}
\newtheorem{corollary}[theorem]{Corollary}
\theoremstyle{definition}
\newtheorem*{remark*}{Remark}
\newcommand{\range}[2]{\in\{#1,\dots,#2\}}
\newcommand{\allrooms}{\mathcal{M}}
\newcommand{\allagents}{\mathcal{N}}
\newcommand{\price}{\mathbf{p}}
\begin{document}

\title{Generalized Rental Harmony}

\author{Erel Segal-Halevi
}
\date{}
\maketitle

\begin{abstract}
Rental Harmony is the problem of assigning rooms in a rented house to tenants with different preferences, and simultaneously splitting the rent among them, such that no tenant envies the bundle (room+price) given to another tenant.
Different papers have studied this problem  under two incompatible assumptions: the miserly tenants assumption is that each tenant prefers a free room to a non-free room; the quasilinear tenants assumption is that each tenant attributes a monetary value to each  room, and prefers a room of which the difference between value and price is maximum.
This note shows how to adapt the main technique used for rental harmony with miserly tenants, using Sperner's lemma, to a much more general class of preferences, that contains both miserly and quasilinear tenants as special cases.
This implies that some recent results derived for miserly tenants apply to this more general preference class too.
\end{abstract}

\textbf{Keywords:} Envy-free; Assignment problem; Rental harmony; Sperner's Lemma

\section{The Problem}
\label{sec:intro}
There is a set $\allagents$ of agents who rent a house together.
The set of rooms in the house is $\allrooms$, with  $|\allagents|=|\allrooms|=n$. 
The total rent is $R$.
It is required to assign a price $p_j\in\mathbb{R}$ to each room $j\in\allrooms$ such that the sum of all prices is $R$,
and assign each room in $\allrooms$ to a unique agent in $\allagents$.
The agents have different preferences on the rooms. The preferences of an agent are represented by a \emph{demand function} --- for each price-vector $\price\in\mathbb{R}^n$, the agent has a set of one or more rooms that he/she considers the \emph{best rooms} given the prices.
An allocation in which 
each agent receives a room from the set of his/her best rooms is called \emph{envy-free}.

The existence of envy-free allocations has been proved using various techniques and under various assumptions on the agents' preferences. \citet{gale1989theory} made the following assumption:
\begin{quote}
\textbf{Quasilinear Tenants Assumption:}
For all  $i\in\allagents$ and $j\in \allrooms$, 
there is a value $v_{i,j}$ representing the value for agent $i$ of room $j$. The best rooms of agent $i$ in price $\price$ are the ones maximizing the difference $v_{i,j} - p_j$. 
\end{quote}
Gale proved the existence of envy-free allocations using linear programming duality. 
Later works provided fast algorithms for calculating an envy-free allocation with quasilinear tenants using various techniques: a compensation procedure \citep{Haake2002Bidding},
a market-based mechanism \citep{Abdulkadiroglu2004Room},
maximum matching and linear programming \citep{Sung2004Competitive}.
The latter approach has been later implemented and tested in the popular website spliddit.org \citep{gal2017fairest}.

\citet{Su1999Rental}, who invented the term ``rental harmony'', made a different assumption --- an agent always prefers a free room if one is available. Su considered only non-negative prices; since we will want to consider general prices, we present a slightly generalized version of his assumption --- an agent always prefers a room with a non-positive price if one is available:
\begin{quote}
\textbf{Miserly Tenants Assumption:}
Given a price-vector $\price$ in which $p_{j}\leq 0$ for some $j\in \allrooms$, every agent has a best room $j^*$ with 
$p_{j^*} \leq 0$.
\end{quote}
In addition, Su assumed that the demand functions are \emph{continuous} in the following sense: if some room $j$ is a best room for agent $i$ for a convergent sequence of price-vectors, then $j$ is a best room for $i$ in the limit price-vector.

Miserly tenants, in general, do not satisfy the quasilinear assumption. The miserly tenants model puts no restrictions on the preferences when all prices are strictly positive. The preferences may even depend on the entire price-vector. For example, the miserly tenants model allows an agent to prefer the first room when the price-vector is $(100,200,200,300)$ and prefer the second room when the price-vector is $(100,200,400,100)$, even though the prices of both rooms have not changed. \citet{Azrieli2014Rental} describe several situations in which such preferences may be reasonable.
For example, the agent may believe that a room with a rent as high as $400$ attracts wealthy neighbors and that this makes the neighboring room more attractive.
As another example, if the second and third room are similar in quality,
the fact that the third room costs $400$ might make the second room look more attractive --- a well-known behavioral bias.
In any case, attaining rental harmony with miserly tenants requires a different technique, which is described next.

\section{Sperner-type Lemmas}
\label{sec:sperner}
\begin{figure}
\includegraphics[height=6cm]{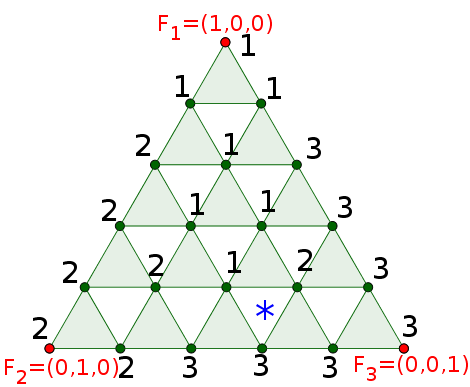}
\hskip 1cm
\includegraphics[height=6cm]{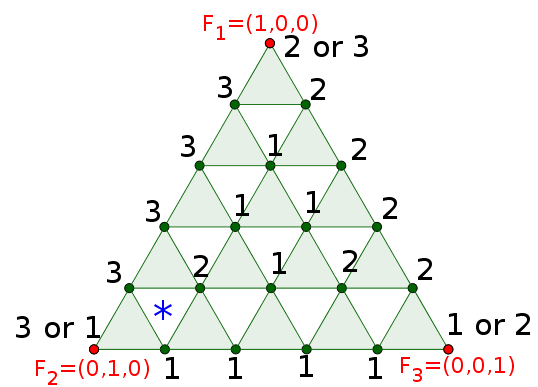}
\caption{
\label{fig:sperner}
Two labelings of a triangulation of a $3$-simplex. 
\textbf{Left}: the labeling satisfies the conditions to Sperner's lemma.
\textbf{Right}: the labeling satisfies the conditions to Scarf's lemma.
In both cases, * denotes a fully-labeled simplex (there may be more than one).
}
\end{figure}
Sperner's lemma and its variants consider an $n$-vertex simplex, which w.l.o.g. is the standard $n$-simplex 
$\Delta^{n-1} = \{(x_1,\ldots,x_n) | \sum_{j=1}^n x_j = 1, \forall j\in[n]: x_j\geq 0\}$.
For each $i\in [n]$, we denote by $F_i$ the main vertex of the simplex in which the $i$-th coordinate is $1$ and all other coordinates are $0$.

The simplex is triangulated, and each vertex of the triangulation is labeled with a label in $\{1,\ldots,n\}$.
The goal of all Sperner-type lemmas is to identify conditions that guarantee the existence of a \emph{fully-labeled sub-simplex} --- a sub-simplex of the triangulation whose vertices are labeled with $n$ different labels.

We say that a labeling satisfies
\emph{Sperner's boundary condition}
if the label on each vertex $(x_1,\ldots,x_n)$ is some $j$ for which $x_j > 0$.
In particular, the label of $F_1$ 
is $1$; the label of each vertex on the line between $F_1$ and $F_2$ is either $1$ or $2$; and so on.
Sperner's lemma says that \emph{every labeling that satisfies Sperner's boundary condition admits a fully-labeled simplex}. An example is shown in Figure \ref{fig:sperner}/Left.

Sperner's lemma has many variants; of particular interest here is the variant proved by \citet{scarf1982computation}. 
We say that a labeling satisfies
\emph{Scarf's boundary condition}
if the label on each vertex $(x_1,\ldots,x_n)$ is some $j$ for which $x_j = 0$. In particular, when $n=3$, the label of $F_1$ is either 2 or 3; the label of all vertices in the interior of the line between $F_1$ and $F_2$ is 3; and so on.
Scarf's lemma says that, in any sufficiently-fine triangulation (specifically, in any triangulation in which no sub-simplex touches all $n$ faces of the large simplex), \emph{every labeling that satisfies Scarf's boundary condition admits a fully-labeled simplex}. An example is shown in Figure \ref{fig:sperner}/Right.

\section{Miserly Tenants and Scarf's Lemma}
\label{sec:miserly}
Su's proof of rental harmony with miserly tenants has two components.
First, associate each vertex of the triangulation to one of the $n$ agents, such that in each sub-simplex, all $n$ agents are represented (this is easy to do with the triangulations illustrated in Figure \ref{fig:sperner}).
Then, associate each point $\mathbf{x}$ in $\Delta^{n-1}$
with a price-vector
$\mathbf{p}$ such that, for all $j\in\allrooms$,
\begin{align}
p_j = R\cdot x_j.
\end{align}
Each vertex of the triangulation is labeled with the index of one of the ``best rooms'' of the agent who owns that vertex, in the price associated with that vertex.

Each vertex on the boundary of $\Delta^{n-1}$ corresponds to a price-vector in which one or more rooms are free. Then, the Miserly Tenants assumption implies that for each agent there is a best room $j\in\allrooms$ for which $x_j=0$. Thus, it is possible to label each vertex of the triangulation such that the labeling satisfies 
Scarf's boundary condition. 
By Scarf's lemma, there is a fully-labeled simplex.
Consider a sequence of finer and finer triangulations. The sequence of fully-labeled simplices has a subsequence that converges to a point. 
At the limit point, by the continuity of the preferences, each agent has a different best room. Hence there is an envy-free allocation with  prices determined by the coordinates of the limit point.

Su's theorem has been extended in various ways: 
\begin{enumerate}
\item \citet{Azrieli2014Rental} considered \emph{rental harmony with roommates}, when each room can accommodate several tenants; 
\item \citet{frick2019achieving,asada2018fair} considered \emph{rental harmony with a secretive agent},
when only $n-1$ agents are present, and they have to determine a price-vector such that, when the $n$-th agent comes and picks a room he prefers, the other $n-1$ agents can allocate the remaining $n-1$ rooms among them in an envy-free way.
\item \citet{meunier2019multilabeled} considered \emph{rental harmony with an extra agent},
when $n+1$ agents are present, and they have to determine a price-vector such that, when any agent leaves, the remaining $n$ agents can allocate the $n$ rooms among them in an envy-free way.
\item 
\citet{nyman2020fair} considered \emph{multi-house rental harmony}, when there are several $n$-room houses (say, a bedroom house and an office building that are being rented to a set of agents together), and each agent should receive a room in each house.
\end{enumerate}
All these extensions use appropriate  generalizations of Scarf's lemma, and they all make an assumption similar to the Miserly Tenants assumption.

Quasilinear tenants do not satisfy the Miserly Tenants assumption. For example, consider a house with total rent $R=1000$ and three rooms: a spacious living-room and two basements. 
Consider a quasilinear tenant who values the living-room at $800$ and each basement at $100$. If the price-vector is $(600,400,0)$, then the quasilinear tenant (quite understandably) strictly prefers the living-room to both basements.
To further illustrate the difficulty with the miserly tenants assumption, note that this assumption combined with the continuity of the demand function implies that every agent is indifferent between all free rooms \citep{frick2019achieving}, which is clearly unrealistic.

\citet{Su1999Rental} notes, in the ``Comments and Discussion'' section, that his proof can be adapted to use a weaker assumption: whenever a free room is available, each agent has a best room that is not the most expensive room. Formally:
\begin{quote}
\textbf{Weak Miserly Tenants Assumption:}
Given a price-vector $\price$ in which $p_{j}\leq 0$ for some $j\in \allrooms$, every agent has a best room $j^*$ with 
$p_{j^*}  < \max_{j}p_j$.
\end{quote}
The above example shows that quasilinear tenants do not satisfy even this weaker assumption: the quasilinear agent prefers the living room although it is the most expensive one.

This raises the question of whether or not the above extensions are valid for quasilinear tenants.
The next section answers this question in the affirmative, by describing a new proof to the existence of rental harmony with quasilinear tenants --- a proof using Sperner's lemma.

\section{Quasilinear Tenants and Sperner's Lemma}
\label{sec:quasilinear}
To handle quasilinear tenants, we just need to change the interpretation of the points in the $n$-simplex: Associate each point $\mathbf{x}$ in $\Delta^{n-1}$
with a price-vector
$\mathbf{p}$ such that, for all $j\in \allrooms$,
\begin{align}
p_j = 1 / x_j.
\end{align}
Then, the boundary points correspond to price-vectors in which some rooms cost infinity. 
This means that each price is an element of the (positive) extended real number line $\mathbb{R_+}\cup\{\infty\}$, rather than an element of $\mathbb{R_+}$.
Note that, since $\sum_{j=1}^n x_j = 1$ and $\forall j: x_j\geq 0$, all prices are positive and at least one price is finite.

As in Su's proof, each vertex of the triangulation is labeled with a best room of the vertex owner. 
A quasilinear agent always prefers a room with a finite price to a room with an infinite price.
Hence, the labeling satisfies Sperner's boundary condition: the label on a boundary vertex is always the index of room $j$ for which $x_j>0$. By Sperner's lemma, the labeling admits a fully-labeled sub-simplex.
Continuity of preferences is preserved too. Hence there exist an envy-free allocation with some price-vector $\mathbf{p}$. 

One problem remains: the sum of prices in $\mathbf{p}$ may be unequal to $R$.
However, with quasilinear agents this is easy to solve.
First, note that all prices in $\mathbf{p}$ are finite --- otherwise no allocation would have been envy-free. 
Let $S := \sum_{j=1}^n p_j$. Let $\mathbf{q}$ be a new price-vector defined by: 
$q_j := p_j + (R-S)/n$.
For a quasilinear agent, adding a fixed amount to the price of each room does not change the relative preference-ordering between the rooms. Hence, the same allocation is envy-free with price-vector $\mathbf{q}$, and the sum of prices is $S+(R-S)=R$.

Note that adding $(R-S)/n$ to all prices might make some prices negative. This means that some tenants are paid to live in their room.%
\footnote{
Negative prices may make sense in some situations. For example, if one of the rooms requires constant maintenance in order to prevent nuisances to the other rooms, then the tenants might agree to pay anyone who will take this room and do the maintenance job. 
}
This is inevitable: in some situations with quasilinear tenants, all envy-free allocations have negative prices.
For example, suppose $n=2$, tenant 1 values the bedroom at $150$ and the basement at $0$, tenant 2 values the bedroom at $140$ and the basement at $10$, and $R=100$. Then in any envy-free allocation 
tenant 1 gets the bedroom and tenant 2 gets the basement, and to avoid envy the difference in prices must be at least $130$. Since the sum of prices is $100$, the price of the basement must be at most $-15$.%
\footnote{
\citet{Brams2001Competitive} show an example with $n=4$ agents and rooms, in which for each agent, the sum of values of all rooms equals $R$, and still there are negative prices in any envy-free allocation.
}

\section{Non-linear Tenants}
\label{sec:nonsatiable}
\citet{svensson1983large} and
\citet{Alkan1991Fair}
generalized the quasilinear tenants model by assuming that each agent $i$ has a transitive preference-relation $\succeq_i$ on (room,price) pairs. For each price-vector $\price$, the best rooms of agent $i$ are the rooms $j$ for which the pairs $(j,p_j)$ are maximal (by $\succeq_i$).
They assume that the preference-relation is continuous and monotonic in the price, i.e., $(j,p)\succeq_i (j,p+\delta)$ whenever $\delta\geq 0$. Without further assumptions, an envy-free allocation might not exist. For example, if $(1,p)\succ_i (2,q)$ for all agents $i$ and prices $p,q$, then the agent who receives room $2$ always envies the agent who receives room $1$. Therefore they make assumptions whose thrust is that every agent can be convinced to select any room, if its price is sufficiently low relative to the other rooms.
The following assumption is made by 
\citet{svensson1983large} before Theorem 1, and by \citet{Alkan1991Fair} at their Theorem 2 proof.%
\footnote{
The term ``Archimedean'' was invented by Rodrigo Velez. 
A similar assumption, called ``Assumption A1'', is presented in a recent survey paper by 
\citet{velez2018equitable}.
}

\begin{quote}
\textbf{Archimedean Tenants assumption:}
There exists a number $T\geq R$ such that an agent always prefers a free room to a room that costs $T$.
Formally, for any agent $i$ and any two rooms $j,j'$:
$(j,0)\succeq_i (j',T)$.
\end{quote}
Note that, if the assumption is satisfied with some $T$, that it is satisfied with any $T' > T$ by the price-monotonicity. Therefore, the assumption $T\geq R$ is without loss of generality.

The Archimedean Tenants assumption is more general than the Quasilinear Tenants assumption: every quasilinear tenant is Archimedean with any $T\geq  \max_{i\in\allagents} \left(\max_j v_{i,j} - \min_j v_{i,j} \right) = $ the largest value-difference between two rooms.
In particular, if the tenant assigns a non-negative value to each room, and the sum of all values is $R$, then $T=R$ will do.

However, it is still not sufficiently general to handle Miserly Tenants. For example, the tenant in the example at the end of Section \ref{sec:intro} cannot be represented by a preference-relation on room-rent pairs, since the preference between the first two rooms depends on the rents of the other rooms.

To handle such \emph{externalities} in the preferences, \citet{velez2016fairness} presents the \emph{compensation assumption} (Definition 2). The following is a slightly simplified version of his assumption, adapted to the language of rooms and rent.
\begin{quote}
\textbf{Compensable Tenants assumption:}
There exists $T\geq R$ such that, 
if there is a room which costs at most $0$ and the most expensive room costs $T$,
then each agent prefers a room that costs less than $T$.
Formally, 
given a price-vector $\price$ in which $\min_j p_j \leq 0$ and $\max_j p_{j} = T$, every agent has a best room $j^*$ with 
$p_{j^*}  < T$.
\end{quote}

The Compensable Tenants assumption is more general than all previous assumptions:
\begin{itemize}
\item Every Archimedean tenant (with some $T$) is compensable with the same $T$.
Suppose $p_j \leq 0$ for some $j\in\allrooms$,
and $p_{j'} = T$ for the most expensive room $j'\in\allrooms$.
Every agent weakly prefers $(j,p_j)$ to $(j,0)$ by price-monotonicity and
$(j,0)$ to $(j',T)$ by the Archimedean assumption. Hence each agent has a best room which costs less than $T$.
\item Every (weakly) miserly tenant is compensable with any $T>0$.
Again suppose $p_j \leq 0$ for some $j\in\allrooms$,
and $p_{j'} = T$ for the most expensive room $j'\in\allrooms$.
By the weak miserly tenants assumption, every agent has a best room which is not the most expensive, so it costs less than $T$.
\end{itemize}
Fortunately, even this most general case can be handled by Sperner's lemma: all that is needed is a different interpretation of the points in the $n$-simplex.
\begin{theorem}
\label{thm:compensable}
An envy-free allocation among compensable tenants always exists.
\end{theorem}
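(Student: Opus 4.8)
The plan is to follow Su's scheme from Section~\ref{sec:miserly} almost verbatim, changing only the dictionary between points of the simplex and price-vectors, exactly as suggested by the sentence preceding the theorem. I would associate each point $\point=(x_1,\dots,x_n)\in\Delta^{n-1}$ with the price-vector $\price$ defined, for every $j\in\allrooms$, by
\begin{align}
p_j = T-(nT-R)\,x_j,
\end{align}
where $T\geq R$ is the constant furnished by the Compensable Tenants assumption (the case $n=1$ being trivial, I may assume $n\geq 2$, so that $nT-R>0$). This affine map has two features I would record immediately: first, since $\sum_j x_j=1$, we get $\sum_j p_j = nT-(nT-R)=R$, so the rent constraint holds at \emph{every} point, in particular at any limit point; second, because $nT-R>0$ and $x_j\geq0$, each $p_j\leq T$, with $p_j=T$ if and only if $x_j=0$.

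The heart of the argument is to check that this map turns the Compensable Tenants assumption into Sperner's boundary condition on $\partial\Delta^{n-1}$. Fix a boundary point, so some coordinate vanishes; by the second feature above, $\max_j p_j = T$. The step I expect to be the main obstacle is verifying that such a boundary point also has a cheap room, i.e.\ $\min_j p_j\leq 0$. For this I would note that if $k\geq1$ coordinates vanish then the remaining $n-k$ coordinates sum to $1$, so their maximum is at least $1/(n-1)$; picking a room $k^*$ with $x_{k^*}\geq 1/(n-1)$ gives
\begin{align}
p_{k^*}=T-(nT-R)\,x_{k^*}\leq T-\frac{nT-R}{n-1},
\end{align}
and the right-hand side is $\leq 0$ precisely because $T(n-1)\leq nT-R$ is equivalent to $R\leq T$. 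Thus on the boundary both hypotheses of the Compensable Tenants assumption hold --- $\min_j p_j\leq0$ and $\max_j p_j=T$ --- so every agent has a best room $j^*$ with $p_{j^*}<T$, equivalently $x_{j^*}>0$.

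With this in hand I would finish as in Su's proof. Assign each vertex of the triangulation to an agent so that every sub-simplex represents all $n$ agents, and label each vertex by the index of a best room of its owning agent at the associated price-vector. On the boundary the previous paragraph guarantees the label $j$ satisfies $x_j>0$, which is precisely Sperner's boundary condition, so Sperner's lemma yields a fully-labeled sub-simplex. Passing to finer and finer triangulations, a subsequence of fully-labeled simplices converges to a point $\point^{*}$; since the map $\point\mapsto\price$ is continuous and finite-valued everywhere, the associated prices converge to a finite $\price^{*}$, and continuity of the preferences shows that at $\price^{*}$ the $n$ agents have $n$ distinct best rooms. This is an envy-free allocation, and because $\sum_j p_j=R$ holds identically no post-hoc price adjustment is needed, in contrast to the quasilinear case of Section~\ref{sec:quasilinear}.
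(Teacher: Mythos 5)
Your proposal is correct and follows essentially the same route as the paper: the identical price map $p_j = T-(nT-R)x_j$, the same verification that boundary points satisfy the hypotheses of the Compensable Tenants assumption (a room at price $T$ and a room priced at most $(R-T)/(n-1)\leq 0$), and the same conclusion via Sperner's boundary condition. The only difference is that you spell out the limiting argument and the identity $\sum_j p_j = R$ more explicitly than the paper, which leaves those to the machinery of Sections~\ref{sec:miserly} and~\ref{sec:quasilinear}.
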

\begin{proof}
Associate each point $\mathbf{x}$ in $\Delta^{n-1}$
with a price-vector
$\mathbf{p}$ such that, for all $j\in \allrooms$,
\begin{align}
\label{eq:price}
p_j = T - (T n-R)x_j.
\end{align}
In particular, when $T=R$ this gives $p_j=R\cdot(1-(n-1)x)$.
Note that $\sum_{j\in\allrooms}p_j = T n-(T n-R) = R$.
If $x_{j'}=0$ for some $j\in\allrooms$, then $p_{j'}=T$.
Moreover, when $x_{j'}=0$, there is at least one other room $j$ for which $x_{j} \geq 1/(n-1)$, which implies $p_{j} \leq (R-T)/(n-1) \leq 0$
(here the assumption $T\geq R$ is used).

The Compensable Tenants assumption then implies that for each agent there is
a best room $j^*\in\allrooms$ for which $p_{j^*}<T$, which implies $x_{j^*}>0$. 
Hence, the agents' labelings satisfy Sperner's boundary condition and an envy-free allocation exists.
\end{proof}

\begin{remark*}
In general, the prices in the envy-free allocation of Theorem \ref{thm:compensable} might be negative. As said in Section \ref{sec:quasilinear}, this may be inevitable since the theorem covers quasilinear tenants.

However, if all tenants are miserly (in addition to being compensable), then the prices in any envy-free allocation must be non-negative,
since otherwise the miserly tenant who gets the most expensive room envies the tenant who gets the room with the negative price.

Thus, the proof combines advantages of  previous proofs: it works both with and without the miserly tenants assumption, and with this assumption it guarantees non-negative prices.
\end{remark*}

The proof of rental harmony existence using Sperner's lemma has two advantages over previous proofs regarding quasilinear agents. 

First, it is arguably simpler. Due to the discrete nature of Sperner's lemma, and thanks to the beautiful and simple proofs available for it \citep{Su1999Rental},
it is easily understood even by students with little background in mathematics.%
\footnote{
This observation is based on the author's experience teaching fair division to computer programmers.
}

Second, the new proof enables to extend the results on rental harmony existence in all settings listed in Section \ref{sec:miserly} (roommates, secretive agents, extra agents or multiple houses)
from miserly tenants to compensable tenants. In particular, all these results hold for quasilinear tenants. These existence results, as far as we know, were not known before.
The following section presents these results briefly.

\section{Extensions}
We will use a generalization of Sperner's lemma, which was proved recently by \citet{meunier2019multilabeled}.
It considers a triangulation of the standard simplex $\Delta^{m-1}$,
which is labeled with labels from the set $\{1,\ldots,m\}$ by $n$ different agents.
For each sub-simplex $\sigma$ of the triangulation, denote by $G(\sigma)$ the bipartite graph with the $n$ agents on one side, the $m$ labels on the other side, and there is an edge between an agent and a label iff the agent uses that label anywhere on that simplex.

Let $G(\sigma)$ be the bipartite graph, and let $w$ be a function that assigns a non-negative weight to each edge of $G(\sigma)$, such that the sum of all weights is $1$.
For each agent $i\range{1}{n}$, let $b(\sigma,w)_i$ be the sum of weights of the edges adjacent to $i$ in $G(\sigma)$.
Similarly, for each label $j\range{1}{m}$, let $a(\sigma,w)_j$ be the sum of weights of the edges adjacent to $j$ in $G(\sigma)$.
Note that the vector $\mathbf{b}(\sigma,w)$ is a point in $\Delta^{n-1}$ and the vector $\mathbf{a}(\sigma,w)$ is a point in $\Delta^{m-1}$.
An illustration is shown in Figure  \ref{fig:weight-vectors} (left).

\begin{figure}
\includegraphics[height=4cm]{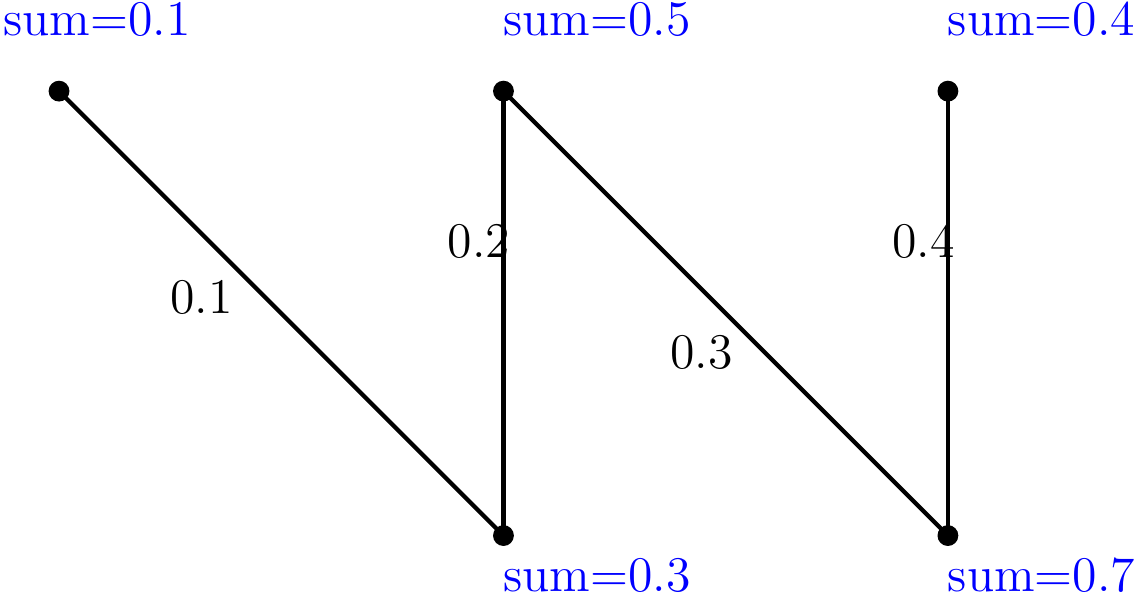}
\hskip 3cm
\includegraphics[height=4cm]{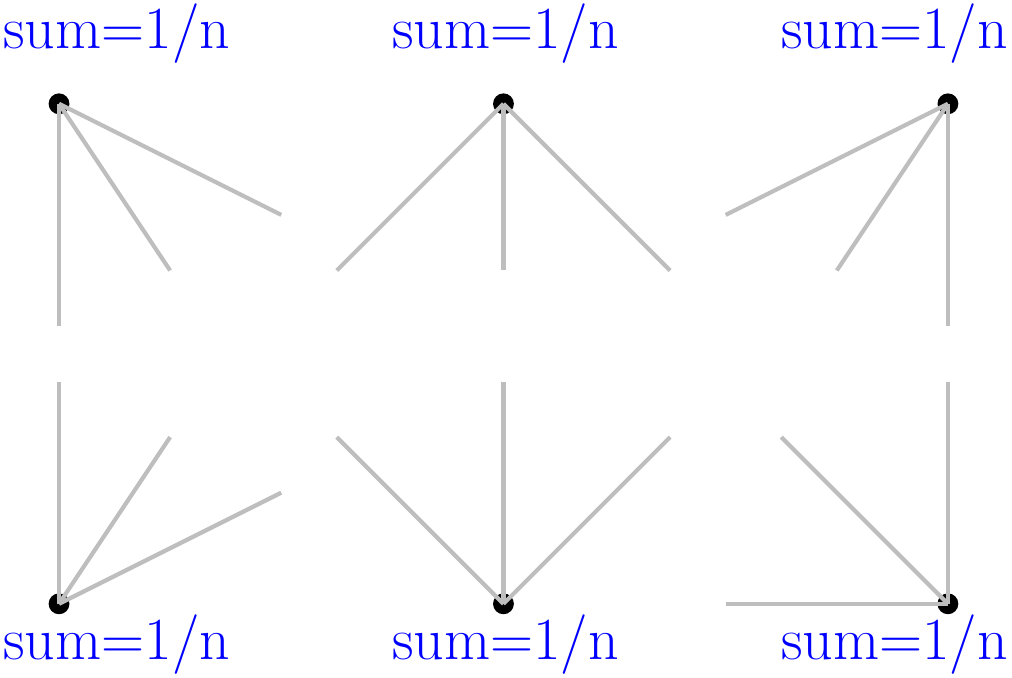}
\caption{
\label{fig:weight-vectors}
\textbf{Left:}
A possible bipartite graph $G(\sigma)$ in a setting with $n=3$ agents and $m=2$ labels.
This graph corresponds to a sub-simplex $\sigma \in\Delta^{2-1}$ in which agent 1 labels both vertices by 1, agent 2 labels one vertex by 1 and one vertex by 2, and agent 3 labels both vertices by 2. The weights on the edges show a possible weight-function $w$. Here $\mathbf{b}(\sigma,w) = [0.1, 0.5, 0.4]$ and $\mathbf{a}(\sigma,w) = [0.3, 0.7]$.
\\
\textbf{Right:} A possible bipartite graph $G(\price)$ in which $\mathbf{a}(w,\price) = \mathbf{b}(w,\price) = (1/n,\ldots,1/n)$.
}
\end{figure}

The following is proved by \citet{meunier2019multilabeled} as a crucial step before the proof of their Theorem 2.3.2.

\begin{lemma}
[\citet{meunier2019multilabeled}]
\label{thm:ms}
Let $\mathbf{a}_0$ be an arbitrary vector in $\Delta^{m-1}$ and 
$\mathbf{b}_0$ an arbitrary vector in $\Delta^{n-1}$.
If $n$ agents label the vertices of a triangulation of $\Delta^{m-1}$ by labels from $\{1,\ldots,m\}$, and all labelings satisfy Sperner's boundary condition, then there exists a sub-simplex $\sigma$ and a weight-function $w$ on $G(\sigma)$ such that 
\begin{align*}
\mathbf{b}(\sigma,w) = \mathbf{b}_0
&&
\mathbf{a}(\sigma,w) = \mathbf{a}_0.
\end{align*}
\end{lemma}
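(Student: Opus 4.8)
The plan is to prove Lemma \ref{thm:ms} by a degree-theoretic / continuity argument that reduces the problem to the multilabeled Sperner's lemma of Meunier and Su, rather than proving it from scratch. The target vectors $\mathbf{a}_0\in\Delta^{m-1}$ and $\mathbf{b}_0\in\Delta^{n-1}$ prescribe, respectively, a desired distribution of weight across the $m$ labels and a desired distribution of weight across the $n$ agents. The key observation is that as we pass to finer and finer triangulations, the set of pairs $(\mathbf{a}(\sigma,w),\mathbf{b}(\sigma,w))$ realizable over all sub-simplices $\sigma$ and all weight-functions $w$ fills out a region of $\Delta^{m-1}\times\Delta^{n-1}$ whose limit contains every admissible target pair. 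So the first step would be to set up the combinatorial framework carefully: fix a sequence of triangulations with mesh tending to zero, and for each one consider the finite set of achievable weight-vector pairs.

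Let me think about the structure. For a single sub-simplex $\sigma$, the achievable set $\{(\mathbf{a}(\sigma,w),\mathbf{b}(\sigma,w)) : w \text{ a weight-function on } G(\sigma)\}$ is the convex hull of the pairs $(\mathbf{e}_j,\mathbf{e}_i)$ over edges $(i,j)\in G(\sigma)$, since $\mathbf{a}$ and $\mathbf{b}$ are both linear in $w$ and $w$ ranges over the probability simplex on the edge set. So the question becomes: over all sub-simplices $\sigma$ in the triangulation, do these convex hulls jointly cover the point $(\mathbf{a}_0,\mathbf{b}_0)$ in the limit? The second step is to verify that Sperner's boundary condition on each agent's labeling gives enough control on the boundary to force coverage. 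Intuitively, Sperner's condition ensures that near the face $x_j=0$ no agent uses label $j$, which pins down the behavior of $\mathbf{a}(\sigma,w)$ on the boundary of $\Delta^{m-1}$ in a way compatible with a degree / index argument.

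The cleanest route would be to invoke the multilabeled Sperner lemma of \citet{meunier2019multilabeled} itself, or the KKM-type machinery behind it, as a black box, and then extract the weighted statement by a limiting argument. Concretely, I would define a continuous map $\Phi$ from the triangulated simplex into $\Delta^{m-1}\times\Delta^{n-1}$ that interpolates the weight-vectors, show that Sperner's boundary condition forces $\Phi$ to have the correct degree (so that it is surjective onto the interior, or at least hits $(\mathbf{a}_0,\mathbf{b}_0)$), and then locate a sub-simplex $\sigma$ whose associated convex hull straddles the target point. A standard compactness argument over the sequence of triangulations then produces, in the limit, an exact sub-simplex $\sigma$ and weight-function $w$ with $\mathbf{a}(\sigma,w)=\mathbf{a}_0$ and $\mathbf{b}(\sigma,w)=\mathbf{b}_0$. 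Because $\mathbf{a}_0$ and $\mathbf{b}_0$ are arbitrary, this requires the argument to be uniform in the target, which a topological-degree formulation delivers automatically.

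The hard part will be the boundary analysis. For the ordinary ($n=1$) Sperner lemma, the boundary condition directly yields a fully-labeled simplex, which corresponds to realizing $\mathbf{a}_0$ at the barycenter with a single agent. Here, with $n$ agents labeling the same triangulation and an arbitrary pair of target distributions, one must show that the joint behavior of the $n$ labelings on the boundary cannot be ``collapsed'' in a way that avoids $(\mathbf{a}_0,\mathbf{b}_0)$ — equivalently, that the relevant degree is nonzero. This is exactly the step where \citet{meunier2019multilabeled} do their real work, so in practice I expect to lean heavily on their Theorem 2.3.2 (or its proof), treating Lemma \ref{thm:ms} as the extraction of the weighted intermediate claim that their argument establishes \emph{en route}. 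The remaining subtlety is purely analytic: converting the per-triangulation approximate solutions into an exact one via the compactness of $\Delta^{m-1}\times\Delta^{n-1}$ and the finiteness of the edge-set of each $G(\sigma)$.
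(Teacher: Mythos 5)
The paper itself offers no proof of Lemma \ref{thm:ms}: it is imported verbatim from \citet{meunier2019multilabeled}, where it is established as an intermediate claim before their Theorem 2.3.2, so your closing instinct to treat it as exactly that claim and cite it coincides with what the paper actually does. The problem is that the argument you sketch around this citation contains a genuine structural error: the mesh-to-zero compactness machinery does not match the statement. The lemma is an \emph{exact} assertion about an arbitrary \emph{fixed} triangulation --- the sub-simplex $\sigma$ must belong to the given triangulation, and the equalities $\mathbf{a}(\sigma,w)=\mathbf{a}_0$, $\mathbf{b}(\sigma,w)=\mathbf{b}_0$ must hold exactly, with no fineness hypothesis anywhere. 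A sequence of finer and finer triangulations produces sub-simplices of \emph{different} triangulations; their limit is a point of $\Delta^{m-1}$, not a sub-simplex of the triangulation you were handed, and the graphs $G(\sigma)$ along the sequence are different finite graphs with no limit object inside the statement. The limiting argument you describe is precisely how the paper later passes from Lemma \ref{thm:ms} to Corollary \ref{cor:ms}, whose conclusion concerns a price-vector (an object that does survive the limit); it cannot be used to prove the lemma itself.

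Your degree-theoretic setup also fails as stated: for $n\geq 2$ a map from $\Delta^{m-1}$ into $\Delta^{m-1}\times\Delta^{n-1}$ cannot be surjective, and degree is not defined between spaces of different dimensions, so ``show $\Phi$ has the correct degree, hence hits $(\mathbf{a}_0,\mathbf{b}_0)$'' has no meaning. The repair is to absorb $\mathbf{b}_0$ into the map rather than into the target. For each vertex $v$ of the triangulation, with labels $\ell_1(v),\dots,\ell_n(v)$ given by the $n$ agents, set $f(v)=\sum_{i=1}^n (\mathbf{b}_0)_i\,\mathbf{e}_{\ell_i(v)}\in\Delta^{m-1}$ (where $\mathbf{e}_j$ is the $j$-th unit vector), and extend $f$ affinely over each sub-simplex. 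Sperner's boundary condition for every agent implies that $f$ maps each face of $\Delta^{m-1}$ into itself (a point of a face lies in a sub-simplex contained in that face, and no vertex of such a sub-simplex carries a label forbidden on that face); a face-preserving PL self-map of the simplex restricts on the boundary to a map homotopic to the identity, hence has degree one and is surjective. Pick $\mathbf{x}^*$ with $f(\mathbf{x}^*)=\mathbf{a}_0$, let $\sigma$ be a sub-simplex containing $\mathbf{x}^*$, write $\mathbf{x}^*=\sum_{v\in\sigma}\lambda_v v$ in barycentric coordinates, and define $w(i,j)=\sum_{v\in\sigma:\,\ell_i(v)=j}\lambda_v(\mathbf{b}_0)_i$. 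Then $w$ is supported on edges of $G(\sigma)$, its sum at agent $i$ is $(\mathbf{b}_0)_i$, and its sum at label $j$ is the $j$-th coordinate of $f(\mathbf{x}^*)=\mathbf{a}_0$ --- an exact proof for the given triangulation, with no limit taken. (Your observation that the pairs achievable over a fixed $\sigma$ form the convex hull of $\{(\mathbf{e}_j,\mathbf{e}_i):(i,j)\in G(\sigma)\}$ is correct, but it is never needed in that form.)
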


We will apply this lemma to labelings generated as in Theorem \ref{thm:compensable}, with $n$ agents and $m$ rooms. 
Each agent $i$ labels each vertex $\mathbf{x}$ with the index of a best room of $i$ given the price-vector $\mathbf{p}$ calculated from $\mathbf{x}$ by equation \eqref{eq:price}.
As explained in the proof of Theorem \ref{thm:compensable}, all these labelings satisfy Sperner's boundary condition.
For each price-vector $\price\in\mathbb{R}^m$, let $G(\price)$ be the bipartite graph defined like $G(\sigma)$ above: the vertices are the $n$ agents and the $m$ rooms, and each agent is adjacent to all his best room/s given the price $\price$.
By the standard continuity argument, we get the following corollary of Lemma \ref{thm:ms}.
\begin{corollary}
\label{cor:ms}
Suppose there are $m$ rooms and $n$ compensable agents.
Let $\mathbf{a}_0$ be an arbitrary vector in $\Delta^{m-1}$ and 
$\mathbf{b}_0$ be an arbitrary vector in $\Delta^{n-1}$.
There exists a price-vector $\price\in\mathbb{R}^m$ and a weight-function $w$ on $G_\mathbf{p}$ such that 
\begin{align*}
\mathbf{b}(\mathbf{p},w) = \mathbf{b}_0
&&
\mathbf{a}(\mathbf{p},w) = \mathbf{a}_0.
\end{align*}
\end{corollary}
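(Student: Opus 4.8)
The plan is to run exactly the discrete-to-continuous scheme used in the proof of Theorem \ref{thm:compensable}, but to feed the labelings into Lemma \ref{thm:ms} instead of plain Sperner's lemma, and then to pass to a limit. First I would set up the labeled triangulation. Since there are now $m$ rooms, the relevant simplex is $\Delta^{m-1}$, and each point $\point\in\Delta^{m-1}$ is associated with the price-vector given by the $m$-room analogue of \eqref{eq:price}, namely $p_j = T - (Tm-R)x_j$ for $j\in\allrooms$. I would triangulate $\Delta^{m-1}$ and let each of the $n$ agents label every vertex $\point$ with the index of one of his/her best rooms at the corresponding price $\price$. The verification that all $n$ labelings satisfy Sperner's boundary condition is identical to the one in Theorem \ref{thm:compensable}: when $x_j=0$ the room $j$ costs $T$ and is the most expensive, while some other room costs at most $(R-T)/(m-1)\leq 0$, so the Compensable Tenants assumption forces each agent to have a best room $j^*$ with $p_{j^*}<T$, i.e.\ $x_{j^*}>0$.

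With the boundary condition in hand, I would apply Lemma \ref{thm:ms} to these $n$ labelings with the prescribed targets $\mathbf{a}_0\in\Delta^{m-1}$ and $\mathbf{b}_0\in\Delta^{n-1}$. This yields, for the chosen triangulation, a sub-simplex $\sigma$ and a weight-function $w$ on $G(\sigma)$ with $\mathbf{b}(\sigma,w)=\mathbf{b}_0$ and $\mathbf{a}(\sigma,w)=\mathbf{a}_0$. I would then repeat this for a sequence of triangulations whose mesh tends to $0$, obtaining sub-simplices $\sigma_k$ and weight-functions $w_k$ with these same target equalities for every $k$.

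The remaining work is the standard compactness-and-continuity limit. Because the mesh tends to $0$, all vertices of $\sigma_k$ converge, along a subsequence, to a common point $\point^*\in\Delta^{m-1}$; let $\price^*$ be its associated price-vector. There are only finitely many possible bipartite graphs on the $n$ agents and $m$ rooms, so along a further subsequence $G(\sigma_k)$ is a fixed graph $G$. The weight-functions $w_k$ then live in a fixed compact simplex (one coordinate per edge of $G$), so passing to yet another subsequence they converge to a weight-function $w^*$. Since $\mathbf{a}(\sigma_k,w_k)$ and $\mathbf{b}(\sigma_k,w_k)$ are continuous (indeed linear) in the weights and equal $\mathbf{a}_0,\mathbf{b}_0$ for every $k$, the limits satisfy $\mathbf{a}(\price^*,w^*)=\mathbf{a}_0$ and $\mathbf{b}(\price^*,w^*)=\mathbf{b}_0$, provided $w^*$ is a legitimate weight-function on $G(\price^*)$.

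Establishing that last proviso is the step I expect to be the main obstacle, and it is exactly where continuity of preferences enters. An edge $(i,j)$ of $G$ is present only if, for arbitrarily fine $k$, agent $i$ labels some vertex of $\sigma_k$ --- a vertex converging to $\point^*$ --- with room $j$; that is, room $j$ is a best room of agent $i$ along a sequence of prices converging to $\price^*$. By continuity of the demand function, room $j$ is then a best room of $i$ at $\price^*$, so $(i,j)$ is an edge of $G(\price^*)$. Hence $G\subseteq G(\price^*)$, the limit $w^*$ is a weight-function on $G(\price^*)$, and the pair $(\price^*,w^*)$ is the desired output. The only care needed is to route every edge of $G$ through this continuity argument; edges that appear only finitely often drop out under the subsequence and cause no trouble.
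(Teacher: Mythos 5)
Your proposal is correct and takes exactly the paper's route: label a triangulation of $\Delta^{m-1}$ with best rooms via the price map of Theorem \ref{thm:compensable} (adapted to $m$ rooms), verify Sperner's boundary condition as there, apply Lemma \ref{thm:ms}, and pass to a limit over finer and finer triangulations. Your detailed compactness-and-subsequence argument, including the check that the fixed limit graph $G$ embeds in $G(\price^*)$ by continuity of the demand functions, is precisely what the paper compresses into the phrase ``by the standard continuity argument.''
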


To illustrate the usefulness of Corollary \ref{cor:ms}, let us use it to re-prove the existence of an envy-free room allocation in the standard setting in which $m=n$.
Apply Corollary \ref{cor:ms} with $\mathbf{a}_0 = \mathbf{b}_0 = (1/n,\ldots,1/n)$.
It implies the existence of a price-vector $\price$ and a weight function $w$ on $G(\price)$, which look as in Figure \ref{fig:weight-vectors} (right).
All weights are weakly-positive, and the sum of weights near each vertex is $1/n$. 
This implies that each agent is adjacent to at least one room; every two agents are adjacent together to at least two rooms; and so on. In general, for each subset of $k$ agents, the total weight near their vertices is $k/n$, so they must be adjacent to at least $k$ different rooms. 
This means that the graph $G(\price)$ satisfies the conditions to Hall's marriage theorem \citep{hall1935representatives}. Hence, there exists a perfect matching of agents to rooms, where each agent is matched to one of his/her best rooms given $\price$.

Next, consider the problem of \emph{rental harmony with roommates}.
There are $|\allrooms|=m$ rooms with $m\leq n$,
and  each room $j\in\allrooms$ has a fixed capacity $c_j$, where $\sum_{j\in\allrooms}c_j = n$.
The goal is to assign to each room $j$ a subset $c_j$ of tenants and a price $p_j$ (which is shared equally among the $c_j$ tenants) such that no tenant is envious.
\citet{Azrieli2014Rental} proved the existence of an envy-free allocation assuming all tenants are miserly. 
Later, \citet{ghodsi2018rent} proved the same assuming all tenants are quasilinear.%
\footnote{
They used a reduction to the standard setting: For each room $j$, construct $c_j$ sub-rooms with capacity $1$, and let each agent value all sub-rooms of room $j$ by $v_j$.
Note that this reduction cannot be used for miserly tenants, since it does not preserve the Miserly Tenants assumption. 
For example, suppose the living-room has capacity $2$ and the basement has capacity $1$.
Suppose the two sub-rooms of the living-room are priced at $0$ and $200$ and the basement is priced at $200$. Then, the living-room and the basement have the same positive price, so a tenant who prefers the basement satisfies the Miserly Tenants assumption in the original problem, but not in the reduced problem.
}
The following theorem generalizes both results.
\begin{theorem}
\label{thm:compensable-roommates}
When all agents are compensable, an envy-free allocation with roommates exists.
\end{theorem}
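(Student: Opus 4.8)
The plan is to mimic the re-proof of the standard $m=n$ case displayed just above: hand the multilabeled Sperner lemma (Lemma \ref{thm:ms}) a family of labelings satisfying Sperner's boundary condition, read off from its conclusion a price-vector together with a fractional assignment, and then round the fractional assignment to an integral one by a Hall-type argument. The only structural change is that the room-side marginals must now be the capacities $c_j$ rather than all $1$. The first step is to fix the right parametrization. Since the rent $p_j$ of room $j$ is split equally among its $c_j$ occupants, the quantity every tenant actually cares about is the per-person price $r_j = p_j/c_j$, so compensability is imposed on the vector $\mathbf{r}=(r_1,\dots,r_m)$. I would associate to each $\mathbf{x}\in\Delta^{m-1}$ the capacity-weighted analogue of \eqref{eq:price}, namely $r_j = T-(Tn-R)\,x_j/c_j$. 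This is engineered so that $\sum_j c_j r_j = R$ holds identically (the total rent is always correct), and so that $x_j=0$ forces $r_j=T=\max_k r_k$.

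The heart of the argument, and the step I expect to be the main obstacle, is verifying that Sperner's boundary condition survives once the capacities are unequal. On the face $\{x_j=0\}$ (assume $m\geq 2$, so $n-c_j\geq 1$; the case $m=1$ is trivial) the remaining mass sums to $1$ but is spread over rooms of total capacity $n-c_j$, so $\max_k(x_k/c_k)\geq 1/(n-c_j)$. Hence $\min_k r_k = T-(Tn-R)\max_k(x_k/c_k)\leq T-(Tn-R)/(n-c_j)$, and this last quantity is $\leq 0$ precisely because $Tc_j\geq R$, which holds since $c_j\geq 1$ and $T\geq R$. Thus on every boundary point we have simultaneously $\min_k r_k\leq 0$ and $\max_k r_k=T$, so the Compensable Tenants assumption furnishes, for each agent, a best room with $r_k<T$, i.e.\ with $x_k>0$. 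Every agent's labeling therefore satisfies Sperner's boundary condition, exactly as in Theorem \ref{thm:compensable}.

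With the boundary condition secured I would apply Lemma \ref{thm:ms} with $\mathbf{b}_0=(1/n,\dots,1/n)\in\Delta^{n-1}$ and $\mathbf{a}_0=(c_1/n,\dots,c_m/n)\in\Delta^{m-1}$ (note $\sum_j c_j/n=1$, so $\mathbf{a}_0$ is a legitimate point of $\Delta^{m-1}$), and pass to the limit over a sequence of finer triangulations exactly as in the derivation of Corollary \ref{cor:ms}, using continuity of the demand functions. This yields a per-person price-vector $\mathbf{r}$ (still satisfying $\sum_j c_j r_j=R$ by continuity of the parametrization) and a weight-function $w$ on the best-room graph $G(\mathbf{r})$ with $b(\mathbf{r},w)_i=1/n$ for every agent $i$ and $a(\mathbf{r},w)_j=c_j/n$ for every room $j$. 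Rescaling all weights by $n$ produces a fractional assignment, supported only on (agent, best-room) pairs, in which each agent carries total weight $1$ and each room $j$ carries total weight $c_j$.

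Finally I would round to an integral allocation. For any set $S$ of agents the total weight incident to $S$ equals $|S|$, and it can be absorbed only by the rooms of $N(S)$, whose total capacity is $\sum_{j\in N(S)}c_j$; hence $|S|\leq\sum_{j\in N(S)}c_j$. This is exactly Hall's condition for the bipartite graph obtained from $G(\mathbf{r})$ by replacing each room $j$ with $c_j$ identical copies, so by Hall's theorem there is a matching saturating all $n$ agents; since $\sum_j c_j=n$, room $j$ receives exactly $c_j$ tenants, each assigned to one of its best rooms under $\mathbf{r}$. Setting $p_j=c_j r_j$ recovers total rent $R$ and an envy-free roommate allocation. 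Both the rounding step (a direct generalization of the $m=n$ Hall computation already displayed in the paper) and the continuity limit are routine; the capacity-weighted boundary estimate of the second paragraph is the only genuinely new ingredient.
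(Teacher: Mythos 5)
Your proposal is correct, and its skeleton is exactly the paper's: invoke the multilabeled Sperner machinery with $\mathbf{b}_0=(1/n,\dots,1/n)$ and $\mathbf{a}_0=(c_1/n,\dots,c_m/n)$, then extract an integral assignment by a Hall-type argument. Where you diverge is in how Corollary \ref{cor:ms} is used. The paper applies it as a black box, i.e.\ with the original parametrization \eqref{eq:price} unchanged, and then cites the one-to-many generalization of Hall's theorem from \citet{Azrieli2014Rental}; it never discusses what ``price'' means for a tenant sharing a room, nor whether the resulting room prices sum to $R$. You instead rebuild the corollary with the capacity-weighted parametrization $r_j = T-(Tn-R)x_j/c_j$ on \emph{per-person} prices, re-verify Sperner's boundary condition (your estimate $\max_k(x_k/c_k)\geq 1/(n-c_j)$ together with $Tc_j\geq R$ is the right computation, and it is valid), and round by splitting each room into $c_j$ copies and applying ordinary Hall. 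This buys two things the paper's terse proof leaves implicit: the compensability assumption is applied to the amounts tenants actually pay, $p_j/c_j$, rather than to raw room prices; and the identity $\sum_j c_j r_j = R$ guarantees the total rent comes out exactly right, which the unmodified parametrization \eqref{eq:price} does not give when $m<n$. So your proof is the same route executed with strictly more care at the one point where the paper's argument is loosest.
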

\begin{proof}
Apply Corollary \ref{cor:ms} with $\mathbf{b}_0 = (1/n,\ldots,1/n)$ and $\mathbf{a}_0 = (c_1/n,\ldots,c_m/n)$.
It implies the existence of a price-vector $\price$ such that, in the bipartite graph $G(\price)$, 
each room $j$ is adjacent to at least $c_j$ agents.
Moreover, every two rooms $j_1,j_2$ are adjacent to at least $c_{j_1}+c_{j_2}$ agents, and so on.
By a straightforward generalization of Hall's marriage theorem (see \citet{Azrieli2014Rental}, Theorem 3 in Appendix B), the graph 
$G(\price)$ admits a one-to-many matching in which each room $j$ is matched to exactly $c_j$ agents.
This corresponds to an envy-free room allocation with roommates.
\end{proof}

Next, consider the problem of \emph{rental harmony with a secretive agent}. There are $n\geq 2$ rooms and $n$ agents, but only $n-1$ agents are present. They need to decide on a price-vector $\price\in\mathbb{R}^n$ such that, when the $n$-th agent comes and picks a room, the remaining agents can allocate the remaining rooms among them (without changing the prices) such that there is no envy.
\citet{asada2018fair} and \citet{frick2019achieving} proved that this is possible with miserly tenants. 
The paper of 
\citet{velez2016fairness} implies a slightly weaker result for compensable tenants: an envy-free allocation exists even when only $n-1$ agents are compensable (but the preferences of the $n$-th agent must still be known).
The following theorem generalizes both results.
\begin{theorem}
\label{thm:compensable-secretive}
When at least $n-1$ agents are compensable, an envy-free allocation can be found even before the preferences of the $n$-th agent are known.
\end{theorem}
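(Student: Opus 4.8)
The plan is to reduce the secretive-agent problem to finding a single price-vector $\price$ (fixed in advance, with $\sum_{j\in\allrooms}p_j=R$) whose best-room bipartite graph stays matchable after deleting \emph{any} one room, and to obtain such a $\price$ directly from Corollary \ref{cor:ms}. Label the $n-1$ present agents by $1,\dots,n-1$ and keep the absent agent as agent $n$; only the present agents need to be compensable, which is exactly the hypothesis. First I would invoke Corollary \ref{cor:ms} with these $n-1$ present agents as the labelers and the $m=n$ rooms as the labels, choosing the marginals $\mathbf{b}_0=(1/(n-1),\dots,1/(n-1))$ over the agents and $\mathbf{a}_0=(1/n,\dots,1/n)$ over the rooms. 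This produces a price-vector $\price$ and a weight-function $w$ on the graph $G(\price)$ of present agents versus their best rooms, such that the weight incident to each present agent equals $1/(n-1)$ and the weight incident to each room equals $1/n$.

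The crux is to convert these uniform marginals into a Hall condition with surplus one. Fix a nonempty set $S$ of present agents. The edges incident to $S$ carry total weight $|S|/(n-1)$, and each such edge lands in the neighborhood $N(S)\subseteq\allrooms$; since room-weights are nonnegative and count all incident edges, the total room-weight on $N(S)$ is at least $|S|/(n-1)$, i.e.\ $|N(S)|/n \geq |S|/(n-1)$. Hence $|N(S)| \geq n|S|/(n-1) = |S| + |S|/(n-1)$, and because $1\leq |S|\leq n-1$ the extra term lies in $(0,1]$, giving $|N(S)|\geq |S|+1$. Since every present agent has at least one best room, this surplus-one inequality says precisely that for every room $j\in\allrooms$ the deleted graph $G(\price)-j$ still satisfies ordinary Hall's condition for the $n-1$ present agents: if $j\in N(S)$ the bound drops by at most one to $|N(S)|-1\geq|S|$, and if $j\notin N(S)$ nothing is lost. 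Thus $G(\price)-j$ admits a perfect matching saturating all present agents, for \emph{every} room $j$.

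Finally I would assemble the allocation. When the $n$-th agent arrives and selects any room $j^*$, delete $j^*$ and use the matching guaranteed above to assign the remaining $n-1$ rooms to the present agents; each present agent receives one of her best rooms over the entire set $\allrooms$, so she envies neither the other present agents nor the secretive agent holding $j^*$, while the secretive agent, having chosen $j^*$ herself, envies no one. The prices are never altered, and they already sum to $R$ by the construction behind Corollary \ref{cor:ms}. I expect the main obstacle to be the combinatorial step rather than any topology: verifying that the uniform choice of $\mathbf{a}_0$ and $\mathbf{b}_0$ yields exactly the surplus-one inequality $|N(S)|\geq|S|+1$, and recognizing that this is equivalent to ``every room is removable.'' Everything else is a direct instantiation of Corollary \ref{cor:ms} together with (a per-deletion application of) Hall's theorem, in the same spirit as the proof of Theorem \ref{thm:compensable-roommates}.
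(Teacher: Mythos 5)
Your proposal is correct and is essentially the paper's own proof: the same application of Corollary \ref{cor:ms} with the $n-1$ present agents and marginals $\mathbf{b}_0=(1/(n-1),\ldots,1/(n-1))$, $\mathbf{a}_0=(1/n,\ldots,1/n)$, followed by Hall's marriage theorem on the graph with the chosen room deleted. The only cosmetic difference is bookkeeping: you verify a surplus-one Hall condition $|N(S)|\geq|S|+1$ from the agents' side before deletion, while the paper counts weights from the rooms' side after deletion ($k/n > (k-1)/(n-1)$); the two are equivalent.
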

\begin{proof}
Apply Corollary \ref{cor:ms} with the $n-1$ present agents and the $n$ rooms.
Let $\mathbf{b}_0 = ({1\over n-1},\ldots,{1\over n-1})$ and $\mathbf{a}_0 = ({1\over n},\ldots,{1\over n})$.
It implies the existence of a price-vector $\price$ 
and a weight-function on the bipartite graph $G(\price)$, 
such that the sum of weights near each present agent is ${1\over n-1}$, and the sum of weights near each room is ${1\over n}$.
Suppose that the $n$-th agent now comes and picks a room.
There are $n-1$ remaining rooms.
The weight near each subset of $k\leq n-1$ rooms is at least ${k\over n}$. This fraction is larger than ${k-1\over n-1}$, so every $k$ rooms are adjacent together to more than $k-1$ agents, which means at least $k$ agents.
Hence, the remaining graph satisfies Hall's marriage condition, and there is a perfect matching between the $n-1$ agents and the $n-1$ remaining rooms. 
\end{proof}

\begin{remark*}
\citet{frick2019achieving} asked whether there always exists an envy-free allocation with both roommates and a secretive agent. The answer is yes, and it can be proved by combining the proofs of Theorems \ref{thm:compensable-roommates} and \ref{thm:compensable-secretive}.
\end{remark*}

Next, consider the problem of \emph{rental harmony with an extra agent}. There are $n$ rooms and $n+1$ agents. They need to decide on a price-vector $\price\in\mathbb{R}^n$ such that, 
when any agent leaves, the remaining agents can allocate the rooms among them (without changing the prices) such that there is no envy.
\begin{theorem}
\label{thm:compensable-extra}
When all agents are compensable, an envy-free allocation with an extra agent exists.
\end{theorem}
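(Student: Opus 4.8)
The plan is to mimic the proof of Theorem \ref{thm:compensable-secretive}, applying Corollary \ref{cor:ms} to the $n+1$ agents and the $n$ rooms, but with a symmetric choice of target vectors. Specifically, I would set $\mathbf{b}_0 = (\tfrac{1}{n+1},\ldots,\tfrac{1}{n+1})\in\Delta^{n}$ (one coordinate per agent) and $\mathbf{a}_0 = (\tfrac1n,\ldots,\tfrac1n)\in\Delta^{n-1}$ (one coordinate per room). Both vectors have coordinate-sum $1$, so the hypotheses of Corollary \ref{cor:ms} are met, and it yields a single price-vector $\price$ together with a weight-function $w$ on $G(\price)$ in which the total weight incident to each agent is $\tfrac1{n+1}$ and the total weight incident to each room is exactly $\tfrac1n$. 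The essential point is that this one price-vector must work no matter which agent eventually leaves.

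Next I would fix an arbitrary departing agent $k$ and verify Hall's marriage condition for the bipartite graph obtained by deleting $k$, which has $n$ remaining agents and $n$ rooms. Take any subset $S$ of the remaining agents with $|S|=s\leq n$, and let $N(S)$ be the set of rooms adjacent to $S$. Deleting $k$ changes neither $S$ nor its edges, so the weight carried by edges out of $S$ is still $s/(n+1)$, and all of this weight lands on rooms in $N(S)$. Since each room absorbs total weight exactly $\tfrac1n$, the rooms of $N(S)$ can absorb at most $|N(S)|/n$ in all, whence $s/(n+1)\leq |N(S)|/n$ and therefore $|N(S)|\geq \tfrac{ns}{n+1}$.

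The crux is then a rounding observation: because $s\leq n < n+1$, we have $\tfrac{ns}{n+1}=s-\tfrac{s}{n+1}>s-1$, and since $|N(S)|$ is an integer this forces $|N(S)|\geq s$. Hall's condition thus holds for every $S$, so by Hall's theorem \citep{hall1935representatives} the $n$ remaining agents can be perfectly matched to the $n$ rooms, each to one of his/her best rooms under $\price$; this is an envy-free allocation. Since the argument used nothing about $k$ beyond $s\leq n$, the same $\price$ works for every choice of departing agent, which is exactly what the theorem requires.

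I expect the only delicate step to be this rounding argument: one must check that the fractional deficit $\tfrac{s}{n+1}$ stays strictly below $1$ precisely because there are $n+1$ agents but at most $n$ of them remain after a departure. This strict inequality is what upgrades the fractional bound $|N(S)|\geq\tfrac{ns}{n+1}$ into the integral Hall inequality $|N(S)|\geq s$; if instead one tried to match all $n+1$ agents to the $n$ rooms, the deficit would reach $1$ and the counting would fail, which is the structural reason an agent must leave.
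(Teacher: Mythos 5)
Your proof is correct and takes essentially the same approach as the paper: the identical application of Corollary \ref{cor:ms} with $\mathbf{b}_0 = (1/(n+1),\ldots,1/(n+1))$ and $\mathbf{a}_0 = (1/n,\ldots,1/n)$, followed by a weight-counting verification of Hall's condition after deleting the departing agent, and Hall's theorem to produce the matching. The only cosmetic difference is that you check Hall's condition from the agent side (every $s$ remaining agents are adjacent to at least $s$ rooms, since their incident weight $s/(n+1)$ exceeds $(s-1)/n$ worth of room capacity), whereas the paper checks it from the room side (every $k$ rooms retain weight at least $k/n - 1/(n+1) > (k-1)/(n+1)$, hence see at least $k$ remaining agents); these are dual counts of the same quantity and both are valid since the reduced graph has $n$ vertices on each side.
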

\begin{proof}
The proof is very similar to Theorem \ref{thm:compensable-secretive}.
Apply Corollary \ref{cor:ms} with the $n+1$ agents and the $n$ rooms.
Let $\mathbf{b}_0 = (1/(n+1),\ldots,1/(n+1))$ and $\mathbf{a}_0 = (1/n,\ldots,1/n)$.
It implies the existence of a price-vector $\price$ 
and a weight-function on the bipartite graph $G(\price)$, 
such that the sum of weights near each agent is $1/(n+1)$, and the sum of weights near each room is $1/n$.
Suppose that agent $i$ leaves. 
Remove from $G(\price)$ the vertex representing $i$ and all its adjacent edges.
In the remaining graph, the weight near each subset of $k$ rooms is at least $k/n - 1/(n+1) > (k-1)/(n+1)$, so every $k$ rooms are adjacent together to at least $k$ remaining agents.
Hence, the remaining graph satisfies Hall's marriage condition, and there is a perfect matching between the $n$ remaining agents and the $n$ rooms. 
\end{proof}

\section{Future Work}
The new existence results proved above open up some interesting computational issues. 
With quasilinear tenants, there are fast algorithms for computing an envy-free allocation (see Section \ref{sec:intro}).
Can these algorithms handle more advanced settings such as roommates, secretive agents or extra agents? 
Can these algorithms be extended to compensable tenants?

\ifdefined\ACKS
\section{Acknowledgments}
I am grateful to Guillaume Ch\`eze, Yaron Azrieli, Eran Shmaya, Rodrigo Velez, Fr\'ed\'eric Meunier and Shira Zerbib for their kind and helpful comments.
\fi

\bibliographystyle{apalike}
\bibliography{../erelsegal-halevi}

\end{document}